\documentclass[11pt]{article}

\usepackage{arxiv}

\usepackage[T1]{fontenc}
\usepackage{hyperref}
\usepackage{xcolor}

\usepackage[normalem]{ulem}
\usepackage{amsthm}
\usepackage{amsmath}
\usepackage{amssymb}
\usepackage{mathrsfs}
\usepackage{mathtools}
\usepackage{graphicx}
\usepackage{enumerate}
\usepackage{comment}
\usepackage{algorithm}
\usepackage{algpseudocode}
\usepackage{color}
\usepackage{float}
\usepackage{appendix}
\usepackage{subcaption}

\newtheorem{definition}{Definition}
\newtheorem{remark}{Remark}
\newtheorem{theorem}{Theorem}
\newtheorem{lemma}{Lemma}

\def \be{\begin{equation}}
\def \ee{\end{equation}}

\begin{document}

\title{The Needle is a Thread: \\ Finding Planted Paths in Noisy Process Trees}

\author{
Maya Le\thanks{Dept of Mathematics and Statistics, University of Ottawa, Ottawa, Canada; \texttt{mle038@uOttawa.ca}},
~~Pawe\l{} Pra\l{}at\thanks{Dept of Mathematics, Toronto Metropolitan University, Toronto, Canada; \texttt{pralat@torontomu.ca}},
~~Aaron Smith\thanks{Dept of Mathematics and Statistics, University of Ottawa, Ottawa, Canada; \texttt{asmi28@uOttawa.ca}},
~~Fran\c{c}ois Th\'{e}berge\thanks{Tutte Institute for Mathematics and Computing, Ottawa, Canada; \texttt{theberge@ieee.org}}
}
\date{}


\maketitle    

\begin{abstract}
Motivated by applications in cybersecurity such as finding meaningful sequences of malware-related events buried inside large amounts of computer log data, we introduce the ``planted path'' problem and propose an algorithm to find fuzzy matchings between two trees. This algorithm can be used as a ``building block'' for more complicated workflows. We demonstrate usefulness of a few of such workflows in mining synthetically generated data as well as real-world ACME cybersecurity datasets.
\keywords{
Planted Path Problem, Process Trees, Sequence Recovery
}
\end{abstract}



\section{Introduction}
\label{sec:intro} 

In many scientific contexts, we observe parts of many large, noisy, labelled directed acyclic graphs and wish to find a small, meaningful path that is common to many of the graphs. In cybersecurity, we might observe the very large ``process trees" of compromised computers and attempt to find the sequence of processes used by an attacker. In supply chain management, we might observe the sequence of sites that a large collection of defective or dangerous products and components went to and attempt to find the source of the problem by finding a common path. In biology, we might observe the genealogical trees of cancer cells or viruses and attempt to find the sequence leading to a drug-resistant or virulent version. In software reverse engineering, we might see many call traces and wish to find the common path associated with a pernicious bug. 

All of these situations have a common mathematical structure: the main goal is to recover a ``planted path" from a large tree. In this paper, we introduce a simple algorithm for extracting paths in Section~\ref{SubsecSimpleAlg} and describe several algorithms for incorporating this algorithm into larger machine-learning workflows in Section~\ref{SecWorkflows}. In Section~\ref{SecSynthToy} we introduce a simple data-generating process and a more formal ``planted labelled path" problem that is loosely analogous to the popular ``planted clique" problem (see~\cite{pclique}) and related ``planted structure" problems (there are many variants; see \textit{e.g.}~\cite{JMLRplant,wu2018statisticalproblemsplantedstructures}). Throughout the paper, we focus on developing realistic models and workflows for messy data over developing statistically-optimal solutions for specific clean versions of the ``planted path" problem. 

While the ``planted path" problem occurs in several areas, we were motivated by the problem of finding meaningful sequences of malware-related events buried inside large amounts of computer log data, and we use this as our main illustrative application. Recall that the basic task in cybersecurity triage is to flag a small number of ``suspicious" lines of a very large log file for careful inspection by an expert (see \textit{e.g.} the survey~\cite{repCyber}). The challenge is that the number of individual log lines that look suspicious under naive heuristics is typically still far too much for human experts to inspect, while the number of ``truly bad" events is quite small. However, it is well-known that \textit{certain families} of cybersecurity events (see \textit{e.g.} the sequence from ``Reconnaissance" to ``Impact" in the MITRE framework\footnote{\url{https://attack.mitre.org/}}) follow our ``planted path" structure quite closely.

We acknowledge that the planted path problem is not a perfect abstraction. In real data, the full event might not be strictly contained in a single path and the noisiness of our observations will typically prevent us from observing a full planted path. We argue that the first of these problems is often minor. In Section~\ref{SubsecACMEExpts} we examine the ACME4 dataset\footnote{\url{https://gdo168.llnl.gov}} and see that the extracted paths capture a meaningful part of the true event, even though they are not the \textit{full} event. In Section \ref{SubsecSmallVar} we describe a small adjustment to our algorithm that allows us to extract non-paths efficiently in the case that this is necessary. The second problem is more important, but even very imperfect filtering can boost the signal of actually-bad sequences (by aggregating events that are meaningfully related) and filter noise (by removing the vast majority of lines that could not belong to any plausible sequence). It has been widely-recognized that some sort of signal-aggregation is critical for creating statistically powerful detectors ~\cite{simplefusion20,optcAnalysis}.

This paper is an exploratory work, showing that path extraction is both feasible and useful for real cybersecurity data. The following version will give further details on more realistic machine learning workflows, comparisons to other algorithms on real datasets, and some basic theory.

\section{Fuzzy Matching Algorithm} \label{SecFuzzyMatch}

We start by introducing the notation that will be used in this section.

\begin{enumerate}
\item Two directed trees $\mathcal{G}$ and $\mathcal{H}$ on sets of nodes $[n] \equiv \{0,1,\ldots,n\}$ and, respectively, $[m]$. Nodes $0$ are the roots of the corresponding trees. We use $anc_{\mathcal{G}}$ to denote the function mapping a non-root node to its unique ancestor, that is, for any $v \in [n] \setminus \{0\}$, $anc_{\mathcal{G}}(v) \in [n]$. Similarly, $desc_{\mathcal{G}}$ denotes the function mapping a node to its set of descendants, that is, for any $v \in [n]$, $desc_{\mathcal{G}}(v) \subseteq [n]$. In particular, $desc_{\mathcal{G}}(v) = \emptyset$ if and only if $v$ is a leaf. Define $anc_{\mathcal{H}}, desc_{\mathcal{H}}$ analogously for $\mathcal{H}$.
\item A set of features $\mathcal{S}$ and two label functions $\phi_{\mathcal{G}} \, : \, [n] \mapsto \mathcal{S}$ and $\phi_{\mathcal{H}}\, : \, [m] \mapsto \mathcal{S}$.
\item A weight function $w \, : \, \mathcal{S}^{2} \mapsto [0,\infty)$. 
\end{enumerate} 

For clarity of exposition, we assume that each node is characterized by a single feature from the set $\mathcal{S}$. While most real-world datasets store many features for each node, this assumption simplifies the notation but there is no loss of generality. (If \textit{e.g.}\ a data structure has many features $\mathcal{S}_{1},\ldots,\mathcal{S}_{k}$, we can simply write $\mathcal{S} = \mathcal{S}_{1} \times \ldots \times\mathcal{S}_{k}$.) Within this framework, the weight function $w$ serves to quantify the similarity between any two features.

\subsection{Definitions} \label{SubsecNotation}

We introduce the notation used to describe partial matches of \textit{single paths} in each tree. We use the partial ordering of nodes in a tree, formally described as follows:

\begin{definition} [Trees and Orderings]
For a directed tree $\mathcal{T}$, define a partial ordering $\leq_{\mathcal{T}}$ on the nodes of $\mathcal{T}$ by writing $u <_{\mathcal{T}} v$ if there is a directed path from $u$ to $v$.
\end{definition}

Note that there is a unique minimal element, the root of $\mathcal{T}$, but there could be many maximal elements corresponding to leaves of $\mathcal{T}$. We also note that it is always possible to order the set of nodes $[n]$ of $\mathcal{T}$ so that $u <_{\mathcal{T}} v$ implies $u < v$. There are usually many such orderings that can be easily found by, for example, running the Breadth-First Search (BFS) or the Depth-First Search (DFS) algorithms and using the discovery time to order nodes. Let us note that the reverse implication does not hold unless $\mathcal{T}$ is in fact a single path. 

For the remainder of the paper, we assume that the nodes of considered trees $\mathcal{G}$ and $\mathcal{H}$ are ordered in the above way. Our goal is to find an optimal matching of nodes in our two trees (with respect to the fixed weight function $w$). Formally, we introduce the following definition:

\begin{definition} [Valid (Partial) Matchings] \label{DefValidPartialMatchingPaths}
We say that a sequence of pairs of nodes $(u_{1},v_{1}), \ldots, (u_{k},v_{k}) \in [n] \times [m]$ is a valid matching between trees $\mathcal{G}$ and $\mathcal{H}$ if $u_{i} <_{\mathcal{G}} u_{i+1}$ and $v_{i} <_{\mathcal{H}} v_{i+1}$ for all $i \in \{1,2,\ldots,k-1\}$.

We say that such a sequence is a valid \textit{partial} matching up to $(u,v) \in [n] \times [m]$ if it is a valid matching \textit{and furthermore} $u_{k} \leq_{\mathcal{G}} u$ and $v_{k} \leq_{\mathcal{H}} v$.
Let us denote by $\mathcal{P}_{u,v}$ the set of valid partial matchings up to $(u,v)$ and by $\mathcal{P} = \bigcup_{u,v} \mathcal{P}_{u,v}$ the set of all valid matchings.
\end{definition}

With the above definition at hand, we can then define the \textit{score} of a valid matching $p =(u_{1},v_{1}), \ldots, (u_{k},v_{k})$ as follows:
\be \label{SequenceScore}
s(p) = \sum_{i=1}^{k} w(\phi_{\mathcal{G}}(u_{i}),\phi_{\mathcal{H}}(v_{i})).
\ee 
Finally, the \emph{similarity score} between trees $\mathcal{G}$ and $\mathcal{H}$ is simply the largest possible score over all valid matchings: 
\be \label{TreesScore}
s( \mathcal{G}, \mathcal{H}) = \max_{p \in \mathcal{P}} s(p).
\ee 

\subsection{Matching Algorithm}\label{SubsecSimpleAlg}

In this subsection, we give a ``fuzzy matching'' algorithm for computing $s( \mathcal{G}, \mathcal{H})$, the feature-based similarity score between $\mathcal{G}$ and $\mathcal{H}$ introduced above---see Algorithm~\ref{alg:basic_match}. The algorithm is very similar to the dynamic programming algorithm of \cite{SM_81}. In particular, we will use the bottom-up (tabulation) variant of the problem which avoids recursion. The time complexity of this algorithm is clearly polynomial: $O(nm)$. 

Throughout this subsection, we use the convention that the root node $0$ has a unique ancestor $-1$, which is never matched. This convention is used purely to avoid introducing special corner cases to the algorithm.

\begin{algorithm}[ht]
\footnotesize
\caption{Basic Matching Algorithm}\label{alg:basic_match}
\begin{algorithmic}[1]
\algrenewcommand\alglinenumber[1]{} 
\State \textbf{Input:} Trees $\mathcal{G}, \mathcal{H}$ on sets of nodes $[n]$ and $[m]$, label functions $\phi_{\mathcal{G}} \, : \, [n] \mapsto \mathcal{S}$, $\phi_{\mathcal{H}} \, : \, [m] \mapsto \mathcal{S}$, weight function $w \, : \, \mathcal{S}^{2} \mapsto [0,\infty).$

\algrenewcommand\alglinenumber[1]{\scriptsize #1}
\setcounter{ALG@line}{0}
\State Initialize $A_{-1,-1} = A_{-1,v} = A_{u,-1} = 0$ for all $u \in [n]$, $v \in [m]$.
\For{$u \in [n]$:}
\For{$v \in [m]$:}
\State Set 
\[
A_{u,v} = \max \left\{
    \begin{array}{lll}
        A_{anc_{\mathcal{G}}(u), v}  \\
        A_{u,anc_{\mathcal{H}}(v)}  \\
        w(\phi_{\mathcal{G}}(u),\phi_{\mathcal{H}}(v)) + A_{anc_{\mathcal{G}}(u), anc_{\mathcal{H}}(v)}
    \end{array}
\right\}
\]
and $C_{u,v} \in \{1,2,3\}$ according to which of these three options was selected (breaking ties in favour of the largest option). 
\EndFor
\EndFor

\State Initialize $\gamma_{\mathcal{G}},\gamma_{\mathcal{H}}$ to be empty lists and $\ell = \textrm{argmax}_{(u,v)} A_{u,v}$.
\While{$\ell_{1} \neq -1$ and $\ell_{2} \neq -1$}: 
\If{$C_{\ell} = 3$}: 
\State Prepend $\ell_{1}$ to  $\gamma_{\mathcal{G}}$ and $\ell_{2}$ to  $\gamma_{\mathcal{H}}$.
\EndIf 
\State Set
\[
\ell  = 
    \begin{cases}
        (anc_{\mathcal{G}}(\ell_{1}),\ell_{2}), \qquad & C_{\ell} = 1 \\
        (\ell_{1},anc_{\mathcal{H}}(\ell_{2})), \qquad & C_{\ell} = 2  \\
         (anc_{\mathcal{G}}(\ell_{1}),anc_{\mathcal{H}}(\ell_{2})), \qquad & C_{\ell} = 3.
    \end{cases}
\]
\EndWhile
\State Let $L$ be the length of $\gamma_{\mathcal{G}}$. Return the sequence $a_{1:L} = (\gamma_{\mathcal{G}}(1),\gamma_{\mathcal{H}}(1)),\ldots,(\gamma_{\mathcal{G}}(L),\gamma_{\mathcal{H}}(L))$ and the score $A = \max_{u,v}(A_{u,v})$.
\end{algorithmic}
\end{algorithm}

Figure~\ref{fig:simple_matching} illustrates a matching of two labelled trees (sampled according to the procedure in Algorithm~\ref{alg:stm} from the next section) that were matched with Algorithm~\ref{alg:basic_match}. The sequences in the ``planted" paths are independent noisy samples from the ground-truth sequence 4-4-2-4-0-2-2-2-3-1 and are placed on the right-hand sides of the trees. The optimal match in the observed data is the sequence 4-0-2-4-2-2-2-4-3-1, and the matched nodes are exactly along the true planted path. Unmatched nodes are coloured grey; matched nodes in the two trees are coloured with same colour. Note that the ``alphabet" of symbols at each node is very small and the planted sequence has noise, so there are both some coincidental ``false positive" matches and some accidental ``false negative" matches along the ground-truth planted path. Nonetheless, the highest-scoring match is a subset of the path with the planted sequence, which is on the right-hand side of the trees as displayed.

\begin{figure}[ht]
    \centering
    \includegraphics[width=0.8\linewidth]{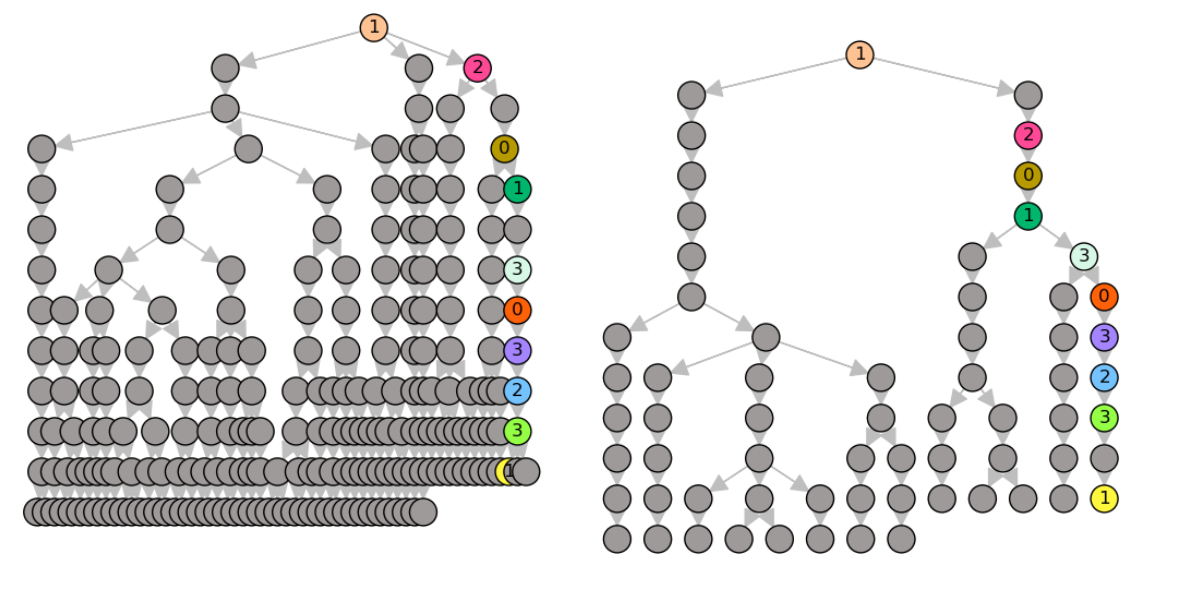}
    \caption{Example of a matching of two trees.} \label{fig:simple_matching}
\end{figure}

\begin{remark} [When do we expect to find the ``ground truth" path?]
It might be surprising that we perfectly recover the planted path in Figure~\ref{fig:simple_matching}, even though the actual matched sequence is not very close to the ground-truth planted sequence.  To aid intuition, we give a rough heuristic argument for when we expect the highest-scoring match to be (nearly) a subset of the planted path. 

Assume that we have a depth-$d$ $m$-ary tree in which a given length-$d$ sequence is planted; other node labels are chosen i.i.d.\ uniformly from an alphabet of size $k$. Also assume the following simple noise model: labels in the planted path are re-sampled uniformly from the alphabet with some ``error" probability $0 < p < 1$.

Let us now consider the optimal match between two randomly labelled trees. Each node along the planted path matches with probability $q \equiv (1-p)^{2} + (1-(1-p)^{2}) k^{-1}$, while nodes \textit{not} on the planted path match with probability $k^{-1}$. Thus, the match of the planted path has score roughly $qd \pm O(\sqrt{qd})$, while nearly-disjoint paths have scores roughly $k^{-1}d \pm O(\sqrt{k^{-1}d})$. There are roughly $m^{d}$ non-planted paths and they are close to disjoint, so standard large-deviation results for Gaussians suggest that the \textit{highest} non-matching path has score roughly $k^{-1}d + \sqrt{k^{-1}d} \sqrt{2 \log(m )d }$. Thus, we expect recovery roughly when
\be 
\left(  (1-p)^{2} + (1-(1-p)^{2}) k^{-1}\right) d \gg \left(k^{-1} + \sqrt{2 \log(m ) k^{-1} } \right) d.
\ee 
In other words, we expect the planted path to have the highest score even when the branching factor $m$ is fairly large and the alphabet size $k$ is fairly small.
\end{remark}

When Algorithm~\ref{alg:basic_match} is applied with score $\mathcal{S} \equiv 1$, it returns the largest common sub-path of the two trees. It can be viewed as a reasonable ``default" behaviour for a planted-path algorithm, but in practice one may want slightly different behaviour including 
(i) matching subtrees instead of paths, 
(ii) preference for shorter matches,
(iii) reduce the number of ``skipped'' nodes, 
(iv) the ``top-$K$" paths, rather than the single best match. 
Moreover, some matches may be more informative than others. A standard default choice is to ``fit" the score function $w$ to the data, setting
    \be \label{SimpleProbScoring}
    w(i,i) \propto \frac{2}{N(i)+2},
    \ee where $N(i)$ is the number of times label $i$ appears in the dataset (see \textit{e.g.} Chapter~8 of \cite{Manning_Raghavan_Schutze_2008} for an introduction to scoring rules of this nature). 

\section{Workflows} \label{SecWorkflows}

We expect Algorithm~\ref{alg:basic_match} to be a useful ``building block'' used as a step inside more complicated workflows. To show its power, we introduce below two that seem particularly relevant. More workflows will be discussed in the journal version of this paper. 

\subsection{Matching Workflow: Unsupervised Setting} \label{SecWorkUns}

One use of Algorithm~\ref{alg:basic_match} is to find a known, small template $\mathcal{H}$ inside of a large observed graph~$\mathcal{G}$. In applications, we may not know a ``good" template even if we have strong reason to believe a good template exists. In this situation, we may consider the problem of finding good matchings between two large graphs, then extracting candidate templates. The main algorithm here is a simple embedding and clustering algorithm based on the matching score. 

As is typically the case for workflows that involve embedding and clustering, there is quite a bit of freedom to ``switch out" components of our algorithm. As such, we give a somewhat-generic version of the workflow, Algorithm~\ref{alg:unsupervised_clustering}, with the following components being user-specified:

\begin{enumerate}
    \item \textbf{Normalizer:} This is a function $\mathcal{N}$ from an $n$ by $n$ ``similarity" matrix to an $n$ by $n$ ``distance" matrix. To be concrete, in our experimental setting we use the following sequence of calculations to find a distance matrix $D = \mathcal{N}(S)$ for a given symmetric similarity matrix $S$: 
\be 
D[i,j] = \sqrt{1 - \frac{S[i,j]}{\sqrt{M[i]M[j]}} }, \qquad \text{ where } M[i] \equiv \max_{j} S[i,j].
\ee 

\item \textbf{Embedder:} This is a function $\mathcal{E}$ from an $n$ by $n$ ``distance" matrix to an $n$ by $d$ ``embedding" matrix. To be concrete, in our experimental setting we use UMAP (Uniform Manifold Approximation and Projection for Dimension Reduction)~\cite{mcinnes2018umap}\footnote{\url{https://umap-learn.readthedocs.io/en/latest/}} with $d=2$.

\item \textbf{Clusterer:} This is a function $\mathcal{C}$ from an $n$ by $d$ ``embedding" matrix to a partition of $\{1,2,\ldots,n\}$. To be concrete, in our experimental setting we use HDBSCAN (Hierarchical Density-Based Spatial Clustering of Applications with Noise)\cite{mcinnes2017hdbscan}\footnote{\url{https://hdbscan.readthedocs.io/en/latest/}}. 
\end{enumerate}

\begin{algorithm}[h]
\footnotesize
\caption{Unsupervised Clustering}\label{alg:unsupervised_clustering}
\begin{algorithmic}[1]
\algrenewcommand\alglinenumber[1]{} 
\State \textbf{Input:} Trees $\{\mathcal{T}_{i}\}_{i=1}^{N}$ with labels $\{\phi_{i}\}_{i=1}^{N}$, weight function $w$, normalizer $N$, embedder $E$, clusterer $C$.
\algrenewcommand\alglinenumber[1]{\scriptsize #1}
\setcounter{ALG@line}{0}
\For{$1 \leq i < j \leq N$:}
\State Find the best matching paths $\gamma_{i},\gamma_{j}$ and score $S_{ij}$ by applying Algorithm \ref{alg:basic_match} to input trees $\mathcal{T}_{i},\mathcal{T}_{j}$, label functions $\phi_{i},\phi_{j}$ and weight function $w$.
\EndFor
\State Compute the distance matrix $D = \mathcal{N}(S)$.
\item Compute the embedding $E = \mathcal{E}(D)$. 
\State Compute and return the clustering $\mathcal{C}(E)$. 
\end{algorithmic}
\end{algorithm}

Given a cluster, we will typically want to learn the actual sequence that was planted in each tree in the cluster (either to allow us to cluster new data more efficiently, or because we believe the planted sequence is meaningful). Since Algorithm~\ref{alg:unsupervised_clustering} involves computing a best-matched sequence in each pair of trees, this amounts to solving the well-known \textit{multiple sequence alignment} (MSA) problem for this collection of best-matched sequences. The literature on this subject is too large to survey (see \textit{e.g.} \cite{ThorneKishinoFelsenstein1991MLAlignment,DoMahabhashyamBrudnoBatzoglou2005ProbCons,KatohStandley2013MAFFTv7,Gotoh1999MSAAlgorithmsApplications,msa}); in this paper we use a simple likelihood-based approach similar to \cite{ThorneKishinoFelsenstein1991MLAlignment}. We will refer to this algorithm as MSA Algorithm. 

\subsection{Matching Workflow: Features for Classifiers} 
\label{SecClassifierWorkflow}

The simplest use for path matching is as a simple feature-augmentation technique. Consider an observed dataset of labelled trees $\{\mathcal{G}_{i}\}_{i=1}^{N}$ with associated feature maps $\{\phi_{\mathcal{G}_{i}}\}_{i=1}^{N}$. We then fix a collection of templates and feature maps $\{(\mathcal{T}_{j},\phi_{\mathcal{T}_{j}})\}_{j=1}^{M}$. We can use these templates to generate auxiliary features $X_{i,j}$ by running Algorithm~\ref{alg:basic_match} with input $\mathcal{G}_{i}, \mathcal{T}_{j}, \phi_{\mathcal{G}_{i}}, \phi_{\mathcal{T}_{j}}$. These features can be used in any downstream task, \emph{e.g.}\ as input for a~classifier.

Of course, one must decide on where the pairs $\{(\mathcal{T}_{j},\phi_{\mathcal{T}_{j}})\}_{j=1}^{M}$ come from. The simplest approach for selecting $\{(\mathcal{T}_{j},\phi_{\mathcal{T}_{j}} ) \}_{j=1}^{M}$ is to randomly select paths that occur in your data - this amounts to using templates as random landmarks, as in \textit{e.g.} \cite{NIPS2007_013a006f}.

\section{Simple ``Planted Path" Model} \label{SecSynthToy}

In this section, we describe two simple data-generating process for our planted-path problem: one for randomly planting a single fixed template in a single fixed tree (Algorithm~\ref{alg:sltoc}), the other generating a collection of many trees that have hard-to-distinguish templates (Algorithm~\ref{alg:stm}). We view these models as loosely  analogous to using the stochastic block model (\textbf{SBM})~\cite{HOLLAND1983109} or a family of Artificial Benchmarks for Community Detection (\textbf{ABCD})~\cite{kaminski2021artificial} as a toy data-generating process for the clustering problem (see \emph{e.g.} ~\cite{kaminski2021mining} for more examples of using random graphs in modelling and mining complex networks). 

Before describing the process, we call attention to two pitfalls of using the \textbf{SBM} that we will try to avoid:

\begin{enumerate}
\item \textbf{Trivial Algorithms Work:} If one generates a large graph from the \textbf{SBM} with a small number of components and link probabilities chosen at random, then simply sorting nodes by degree will typically give a reasonably good clustering. We view this as a failure of the toy model, since we know that such ``trivial" algorithms typically fail on realistic data.
\item \textbf{Generated Examples Too Dissimilar to Real Data:} The usual \textbf{SBM} provides unlabelled graphs. However, in practice, node features are extremely useful for clustering. We might expect clustering algorithms developed for the \textbf{SBM} to often generalize poorly.
\end{enumerate}

We partially resolve the first problem by ensuring that, in Algorithm~\ref{alg:stm}, the distribution of the \textit{set} of labels $\{\phi(u)\}_{u \in \mathcal{T}}$ of each graph $\mathcal{T}$ is independent of the particular planted path. This ensures that trivial algorithms that do not use the tree structure, such as counting the number of times labels occur, must fail. We partially resolve the second problem by allowing a wide variety of underlying graph topologies and labels.

\subsection{Labelling Trees with Planted Paths}

Our basic model for planting a non-random sequence of features in a non-random tree is the following algorithm. We assume that we are given a tree $\mathcal{T}$ with a path $\Gamma$ of length $\ell$ identified. Our goal is to plant a random subsequence of a given sequence of features of length $k$ on a random part of the path $\Gamma$. Parameter $p$ and function $r$ model the length and, respectively, the distribution of a random subsequence. The remaining features are selected according to a distribution $\pi$. 

\begin{algorithm}[ht]
\footnotesize
\caption{Randomly Labeling Trees from Template}\label{alg:sltoc}
\begin{algorithmic}[1]
\algrenewcommand\alglinenumber[1]{} 
\State \textbf{Input:} Tree $\mathcal{T}$, connected path $\Gamma \equiv (\gamma_{1},\ldots,\gamma_{\ell}) \subseteq \mathcal{T}$, distribution $\pi$ on $\mathcal{S}$, sequence $a_{1},\ldots,a_{k} \in \mathcal{S}$, rate $r \, : \, \mathcal{S} \mapsto (0,\infty)$, observation probability $0 < p \leq 1$.

\algrenewcommand\alglinenumber[1]{\scriptsize #1}
\setcounter{ALG@line}{0}
\State Sample $N' \sim \mathrm{Bin}(\ell,p)$ and set $N = \min(N',k)$. 
\State Sample set $S_{1} \subseteq \{1,2,\ldots,\ell\}$ of size $N$ uniformly at random, and sample set $S_{2} \subseteq \{1,2,\ldots,k\}$ of size $N$ with probability proportional to $\prod_{s \in S_{2}} r(s)$. Let $s_{1}(1) < \ldots < s_{1}(N)$, $s_{2}(1) < \ldots < s_{2}(N)$ be their elements in increasing order.
\For{$i \in \{1,2,\ldots,N\}$}
\State Set $\phi(\gamma_{s_{1}(i)}) = a_{s_{2}(i)}$.
\EndFor
\For{$v \in \mathcal{T} \setminus \{\gamma_{i}\}_{i \in S_1}$}
\State Randomly set $\phi(v) \sim \pi$.
\EndFor
\State Return $\phi$.
\end{algorithmic}
\end{algorithm}

In a typical dataset, one might have a family of trees, each with one of many possible planted paths. Before describing an associated synthetic random model, we need some further notation. We say that $\mu$ is a \textit{distribution on (tree, path) pairs} if a sample $(\mathcal{T}, \Gamma) \sim \mu$ consists of a directed tree $\mathcal{T}$ and a sequence $\Gamma = \{\gamma_{1},\ldots,\gamma_{k}) \subseteq \mathcal{T}$ that satisfies $\gamma_{i} <_{\mathcal{T}} \gamma_{i+1}$ for all $i \in \{1,2,\ldots,k-1\}$. 

For our experiments, we will use a simple model that fits our data reasonably well. Our model generates random trees $\mathcal{GW}(N,\lambda)$ with depth at most $N$ and aims to produce ``bushy'' trees. Fix a positive integer $N$ and a real number $\lambda > 1$. To generate a random tree, we will use a well-known Galton-Watson process in which the number of children of nodes are i.i.d.\ random variables that follow the Poisson distribution with mean $\lambda$. 

\begin{algorithm}[ht]
\footnotesize
\caption{Sampling from Toy Model}\label{alg:stm}
\begin{algorithmic}[1]
\algrenewcommand\alglinenumber[1]{} 
\State \textbf{Input:} Distribution $\mu$ on (tree, path) pairs, distribution $\pi$ on $\mathcal{S}$, base sequence $a_{1},\ldots,a_{k} \in \mathcal{S}$, rate $r \, : \, \mathcal{S} \mapsto (0,\infty)$, observation probability $0 < p \leq 1$, number of observations per class $n_{1},\ldots,n_{M}$. 

\algrenewcommand\alglinenumber[1]{\scriptsize #1}
\setcounter{ALG@line}{0}
\For{$i \in \{1,2,\ldots,M\}$:}
\State Sample permutation $\sigma_{i}$ of $\{1, 2, \ldots, k\}$ uniformly at random.
\For{$j \in \{1,2,\ldots,n_{i}\}$:}
\State Sample a tree and path $(\mathcal{T}_{ij},\Gamma_{ij}) \sim \mu$.
\State Sample labels $\phi_{ij}$ according to Algorithm \ref{alg:sltoc} with input tree $\mathcal{T}_{ij}$, path $\Gamma_{ij}$, distribution $\pi$, sequence $a_{\sigma_{i}(1)},\ldots,a_{\sigma_{i}(k)}$, rate $r$, and observation probability $p$.
\EndFor
\EndFor

\State Return $\pi, \mathcal{T}_{ij}, \Gamma_{ij}, \phi_{ij}$.
\end{algorithmic}
\end{algorithm}

In our basic model, Algorithm~\ref{alg:stm}, all of the planted sequences $\{a_{\sigma_{i}(j)}\}_{j=1}^{k}$ are permutations of a single base sequence $\{ a_{j}\}_{j=1}^{k}$. This has the advantage of ensuring that the distributions of sets of labels are completely independent of the class $i \in \{1,2,\ldots, M\}$, ensuring that any algorithms based only on the sets of labels cannot distinguish between the classes. Of course, one could make a small tweak to this algorithm to allow any collection of planted sequences.

Algorithms~\ref{alg:sltoc} and~\ref{alg:stm} deal with two extreme cases: either both tree and planted path are fixed, or neither are fixed. Of course, it is reasonable to consider the intermediate situation for which the planted path is known and fixed while the trees are random. This corresponds to running Algorithm \ref{alg:stm} with $M=1$.

\subsection{Sanity Check: Is This Problem Nontrivial but Tractable?} 
\label{sec:GW-sanity}

A natural question is: for our simulated data, how hard is it to distinguish a random tree \textit{with} a planted path from a random tree \textit{without} the path? Our goal is to check that this problem is nontrivial (in that simple approaches, such as label-counting, fail) and tractable (in that the problem would be fairly easy if you knew the path itself).

For this experiment, we use Algorithm \ref{alg:stm} (with base tree generator described in Section~\ref{sec:GW-tree}) to generate 200 trees from $M=2$ classes and observation probability $p=0.75$, with a very small base alphabet $|\mathcal{S}|=5$. Let us highlight the fact that a small alphabet with ``bushy'' trees ensures that there will be many ``accidental" or ``false positive" matches. Our scoring algorithm would look substantially better, even for much smaller values of $p$, if $|\mathcal{S}|$ were much larger. 

The left-hand side of Figure~\ref{fig:sim_hist} shows a typical tree, with the candidate planted path highlighted (this is before subsampling at rate $p=0.75$). The ``similarity score" plotted on the right-hand side of Figure~\ref{fig:sim_hist} is computed by finding the similarity of each tree and the reference sequence $a$ associated with class 1 using Algorithm~\ref{alg:basic_match}. We then plot separately the similarities of all trees in class 1 (``within-cluster") and those in class 2 (``outside-cluster"). This experiment shows that our similarity score is very good at distinguishing between these two classes even when the planted path is both (i) a small fraction of the full ``bushy" tree and (ii) sampled at a fairly small rate of $p=0.75$.

\begin{figure}[ht]
    \centering
    \includegraphics[width=0.8\linewidth]{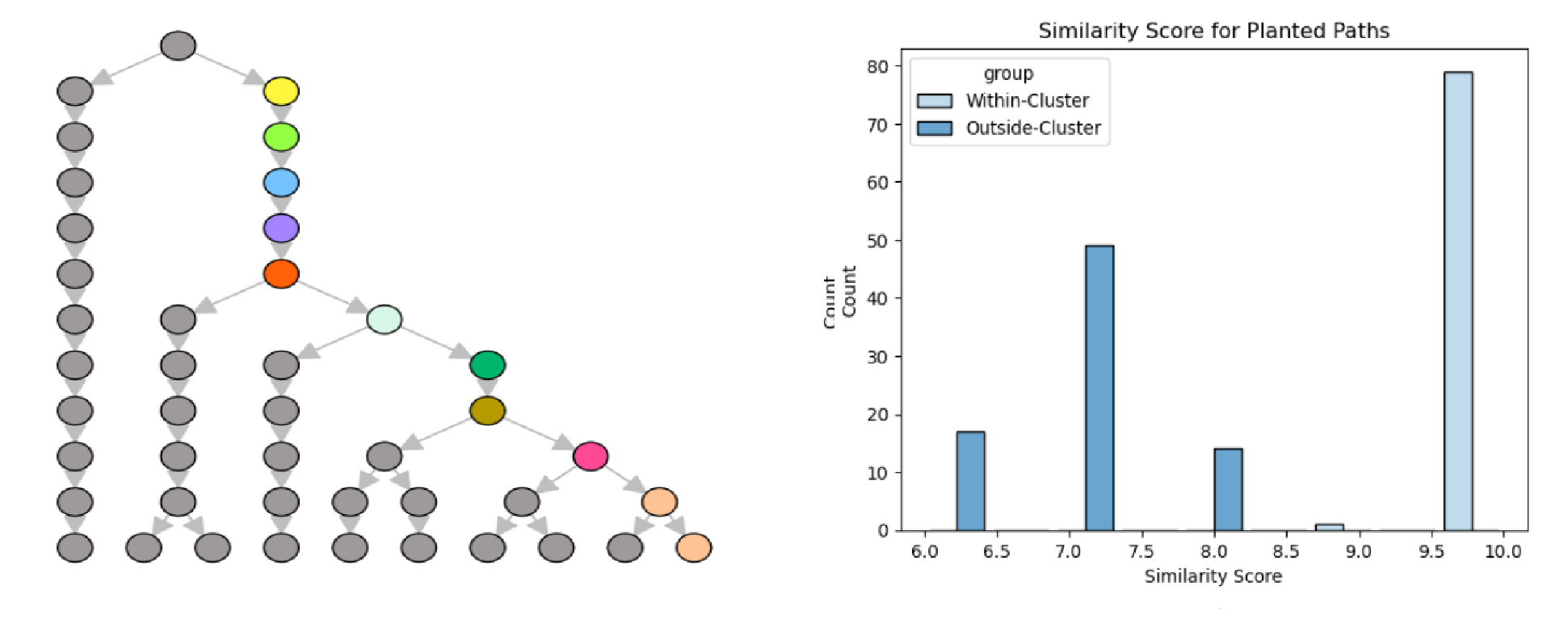}
    \caption{Example of a typical tree (left) and histogram of similarity scores for the two classes (right).}
    \label{fig:sim_hist}
\end{figure}

Figure~\ref{fig:count_hist}, on the other hand, shows that the histograms of symbol frequencies are indistinguishable. In these figures, for each tree, we compute the number of times a given symbol appears as a node label. We then plot the histogram of these counts for within- and outside-cluster trees. 

\begin{figure}[ht]
    \centering
\includegraphics[width=0.9\linewidth]{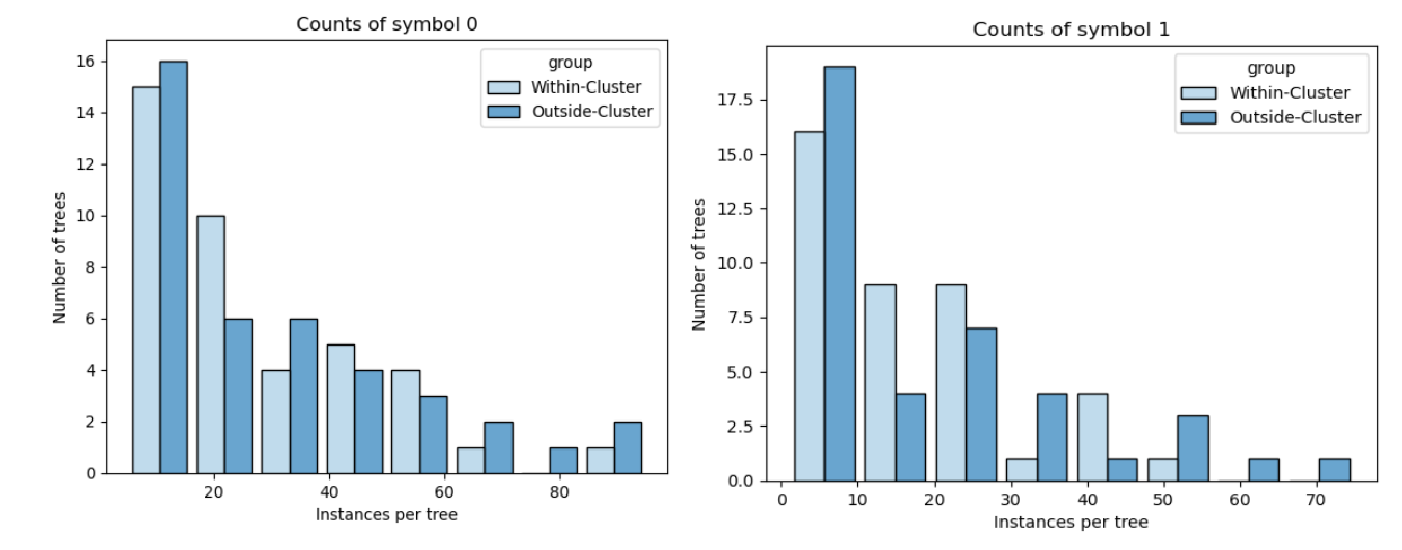}
    \caption{Histograms of symbol frequencies.}
    \label{fig:count_hist}
\end{figure}

\subsection{Experiments on the Synthetic Model}

We check that the unsupervised workflow proposed in Section~\ref{SecWorkUns} works reasonably well for this toy ``planted path" model.
In our first experiment, we generate labelled trees using Algorithm~\ref{alg:stm}, with the main parameters being:
\begin{enumerate}
\item The base distribution on trees is $\mathcal{GW}(12,1.8)$,
\item The base distribution $\pi$ on the alphabet is uniform on $\mathcal{S} = \{A,B,C,D,E\}$, 
\item The base sequence $a$ is sampled from length-10 sequences using $\pi$,
\item The observation probability is $p=0.9$, with rate $r$ constant, and
\item The number of observations per class is $n_{1} = \ldots = n_{4} = 50$.
\end{enumerate}

We then run Algorithm~\ref{alg:unsupervised_clustering}, with parameters as described in Section~\ref{SecWorkUns}. The resulting embedding is presented in Figure~\ref{fig:cluster_pic} (left). 
Additionally, we use our variant of the MSA Algorithm to extract exemplars, one for each family, and see whether they separate classes; see Figure~\ref{fig:cluster_pic} (right).
Even a quick visual inspection shows that both algorithms do fairly well, with the four ground-truth clusters well-separated and distance-to-exemplars much smaller for trees in the same class than trees in other classes. 

\begin{figure}[ht]
    \centering
    \includegraphics[width=0.9\linewidth]{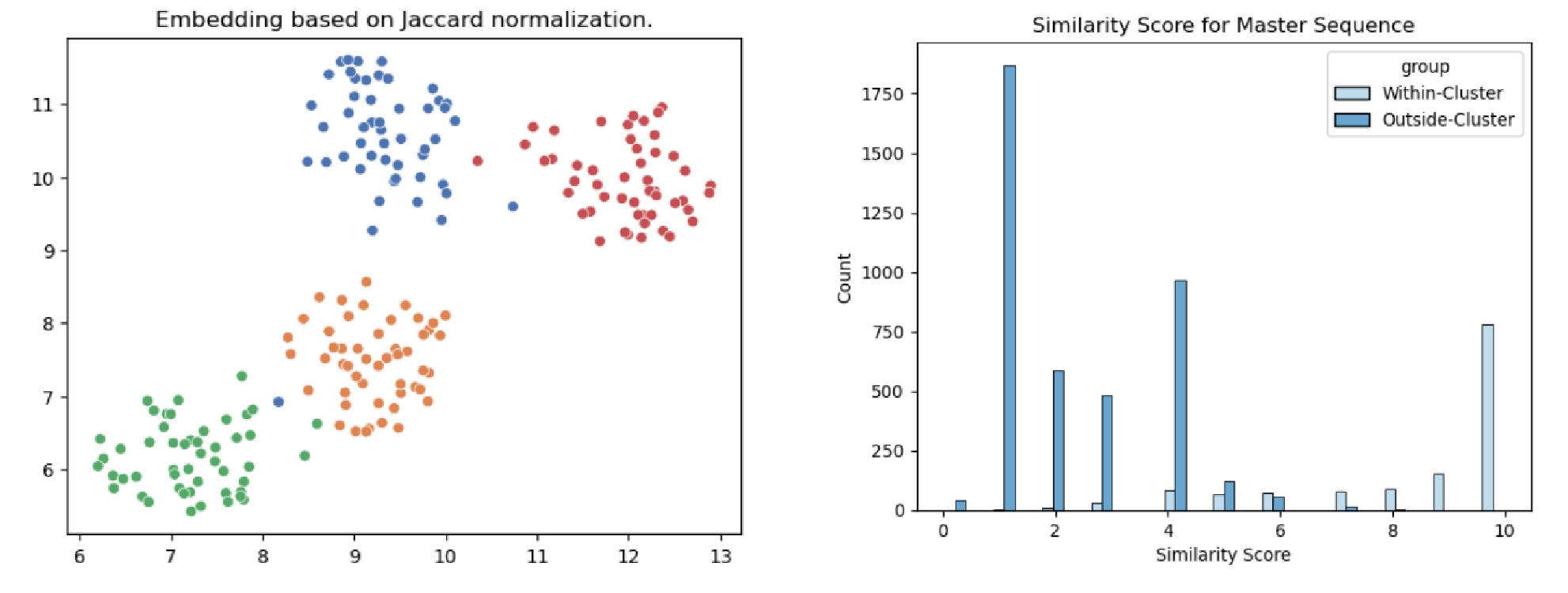}
    \caption{Embedding of trees coloured by the ``ground-truth" cluster label (left). Similarity score for the extracted exemplars (right).}
    \label{fig:cluster_pic}
\end{figure}

Next, we consider essentially the same experiment, with four changes: we have $n_{1} = \ldots = n_{6} = 18$ observations per class and $p=0.7$ (so that the base problem becomes slightly harder), we set $\pi(i) \approx (0.247, 0.247, 0.247, 0.247, 0.012)$ (that is, the symbol $E$ is now rare), and we put two copies of $E$ inside the base sequence $a$. We then run Algorithm~\ref{alg:unsupervised_clustering} as above, but with two different choices for $w$: the ``unweighted" choice $w \equiv 1$, and the ``weighted" choice given in formula \eqref{SimpleProbScoring}. The two resulting embeddings are displayed in Figure~\ref{fig:weighted_cluster_pic}.
A quick visual inspection of Figure~\ref{fig:weighted_cluster_pic} shows that a probability-weighted score function substantially improves separation of the clusters, as expected.

\begin{figure}[ht]
    \centering
    \includegraphics[width=0.9\linewidth]{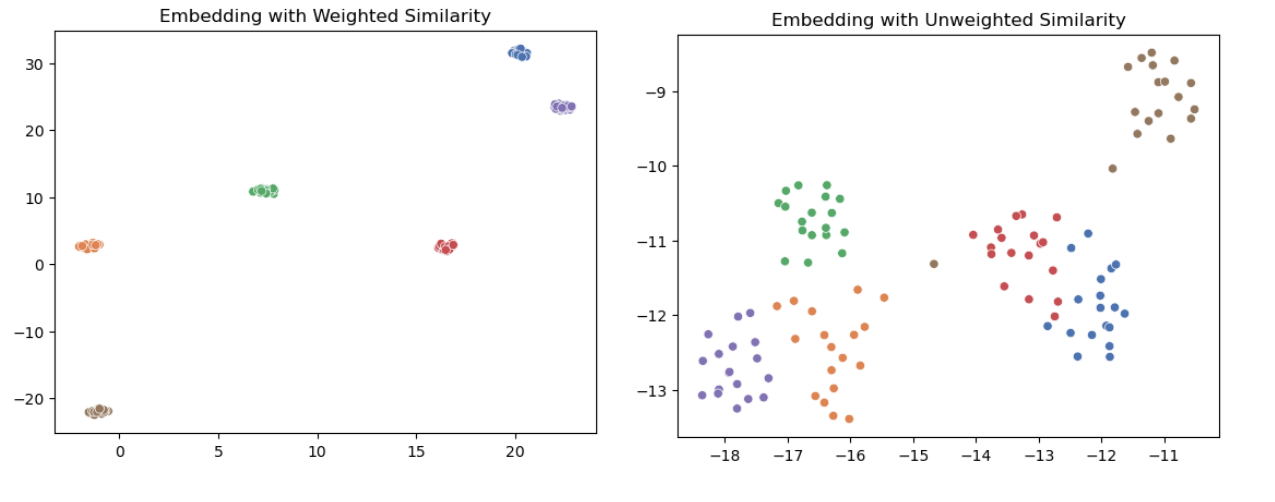}
    \caption{Two embeddings of six classes of trees, coloured by the ``ground-truth" cluster label: weighted (left) and ``unweighted'' (right).}
    \label{fig:weighted_cluster_pic}
\end{figure}

\section{Matching Paths in the ACME4 Dataset}\label{SubsecACMEExpts}

\textbf{ACME} refers to cybersecurity datasets using the open-source Wintap telemetry collection tool developed by the Lawrence Livermore National Laboratory (LLNL). Designed specifically for cybersecurity research, \textbf{ACME} addresses the complexities of gathering and analyzing host-based data within large-scale Windows environments. Further documentation for both Wintap and \textbf{ACME} is available on-line\footnote{\url{https://gdo168.llnl.gov}}. 

We use the \textbf{ACME4} dataset, in particular, the tables defined in the standard view collection\footnote{\url{https://gdo168.llnl.gov/data/newdocs/datadict}}. The \textbf{ACME4} data simulates a Windows business network with 10 workstations over a 2-week period and a variety of attacks (Living off the Land, Caldera, Metasploit, etc.). Simulated malicious actors are explicitly identified via ``bad'' username labels.

\subsection{Process Trees}

To build process trees, we extracted the following information from the {\tt process\_uber\_table} in the standard view collection for every process:
the process identifier (PID) hash, its parent PID hash, the process name (often blank), and the user name (also often blank).
From this data, we built a large directed graph where each edge is directed from parent to child. We then extracted the connected components, which correspond to the process trees.
We obtained 1,111,277 trees, for a total of 2,884,171 nodes (PIDs) and 1,772,894 edges.
The sizes of the trees range from 2 to 257,744, with the vast majority being of size~2.

Looking at every process tree, only 8 contain bad users. We show the smallest such tree in Figure~\ref{fig:bad_tree}, where each node is indexed with a shortened and anonymized user name (such as ``user88'', ``bad3'' or ``SYS'') followed by the process name and separated with ``::''; recall that both values can be blank.

\begin{figure}[!ht]
    \centering
    \vspace{-.75cm}
    \includegraphics[width=0.9\linewidth]{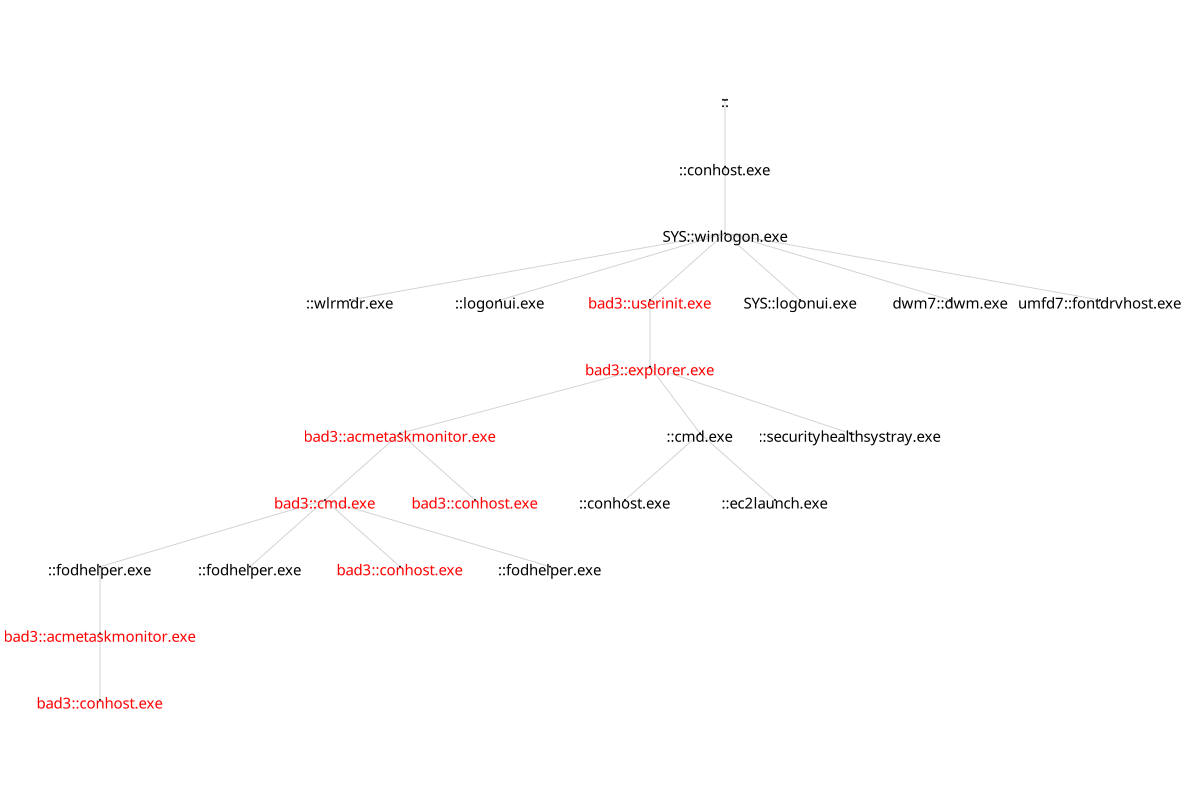}
    \caption{Example of a process tree containing PIDs associated with ``bad'' users (shown in red). We use {\tt short\_username::process\_name} as the node labels.}
    \label{fig:bad_tree}
\end{figure}

\subsection{Matching Algorithm Experiments}

In order to illustrate the use of matching algorithms such as Algorithm 1, we consider paths consisting of more that a single edge.
Since the vast majority of process trees in the ACME4 dataset are of size 2, we build a larger set of sub-trees as follows.

For every process tree of size 3 to 10,000 (there are 210 of those), we consider every sub-tree of size 3 or more for which the root process name is not blank or labelled as ``unknown''. This yields 2,693 process sub-trees, with sizes ranging from 3 to 9,558 nodes, and depth from 2 to 15.
Of those, 131 trees have at least one process associated with a ``bad'' user name. We selected one such tree as our ``reference'' tree, shown in Figure \ref{fig:algo1_example}(a). We then applied Algorithm~\ref{alg:basic_match}, looking for matching path(s) with high scores in all other process trees. High scoring trees are shown in Figure \ref{fig:algo1_example} using three different scoring functions to compare the node labels. 
In Figure \ref{fig:algo1_example}(b), we consider only the process names with a binary score (1 if the processes are the same, and 0 otherwise). We see that we found a matching path consisting of 6 nodes with the same process names as the path highlighted in (a), thus with a total score of 6, but the path in (b) has blank usernames. Note that we show all nodes not part of the matching path as simple dots for easier visualization.
In Figure \ref{fig:algo1_example}(c), we consider both the process names and usernames, where all usernames starting with ``user'' or ``bad'' are considered as matching. We use a binary function equal to 1 only if both the processes and usernames match. We again found a matching path consisting of 6 nodes (with a total score of 6) with the same process names as the path highlighted in (a), but the path in (c) has username ``user1'' instead of ``bad3'' (and the two other usernames also match, namely blank and SYS).
In Figure \ref{fig:algo1_example}(d), we do as in (c), but we give partial score for each matching part: 0.75 if the process names match, and 0.25 if the usernames match. We show a matching path with a total score of 5.75, due to the fact that 3 of the 4 instances of ``bad3'' username in (a) correspond to the non-blank username ``user88'' in (d), but the fourth username is blank.

\begin{figure}[ht]
    \centering    
    \vspace{-.25cm}
    \begin{subfigure}{.49\textwidth}
        \includegraphics[width=\textwidth]{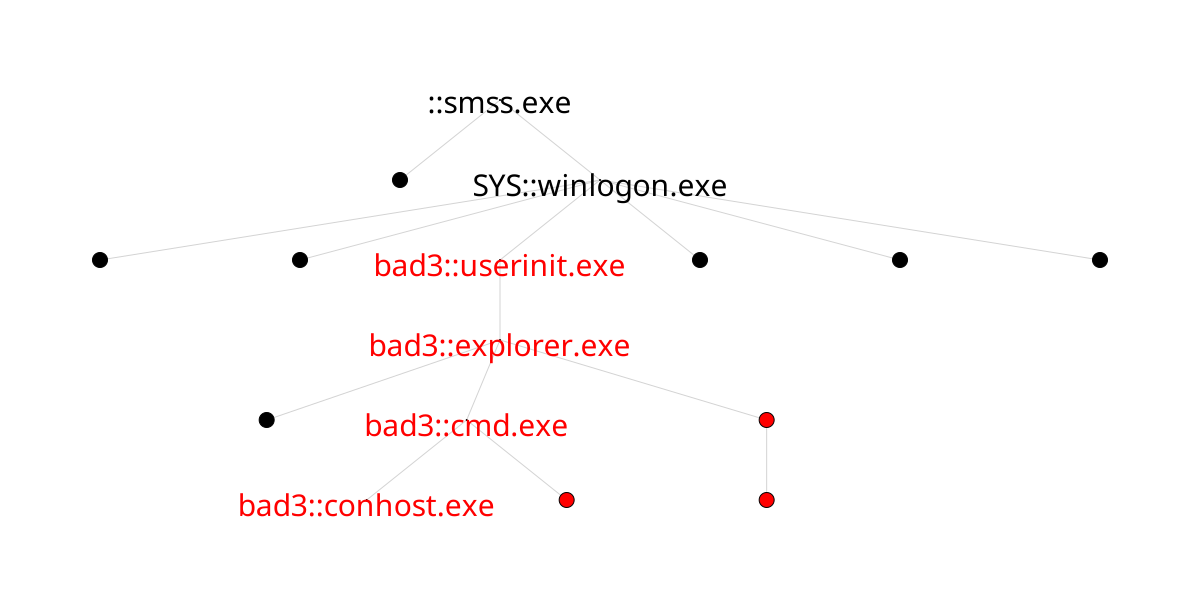}
        \caption{Reference tree}
        \label{fig:1}
    \end{subfigure}
    \hfill
    \begin{subfigure}{0.49\textwidth}
        \includegraphics[width=\textwidth]{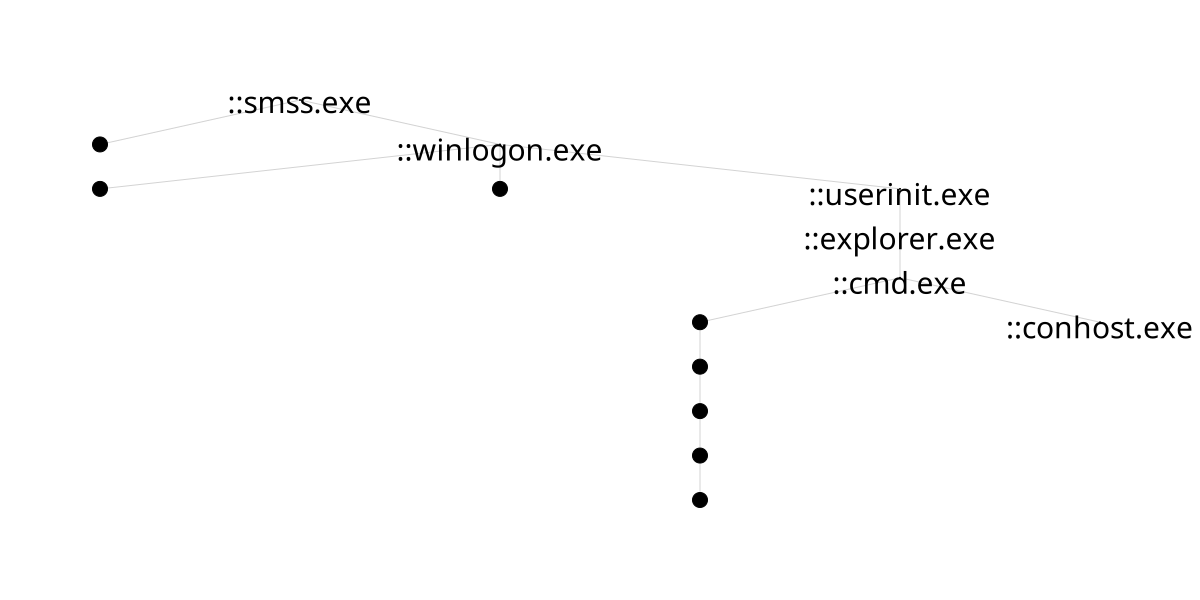}
        \caption{Exact matching on processes}
        \label{fig:2}
    \end{subfigure}

    \begin{subfigure}{.49\textwidth}
        \includegraphics[width=\textwidth]{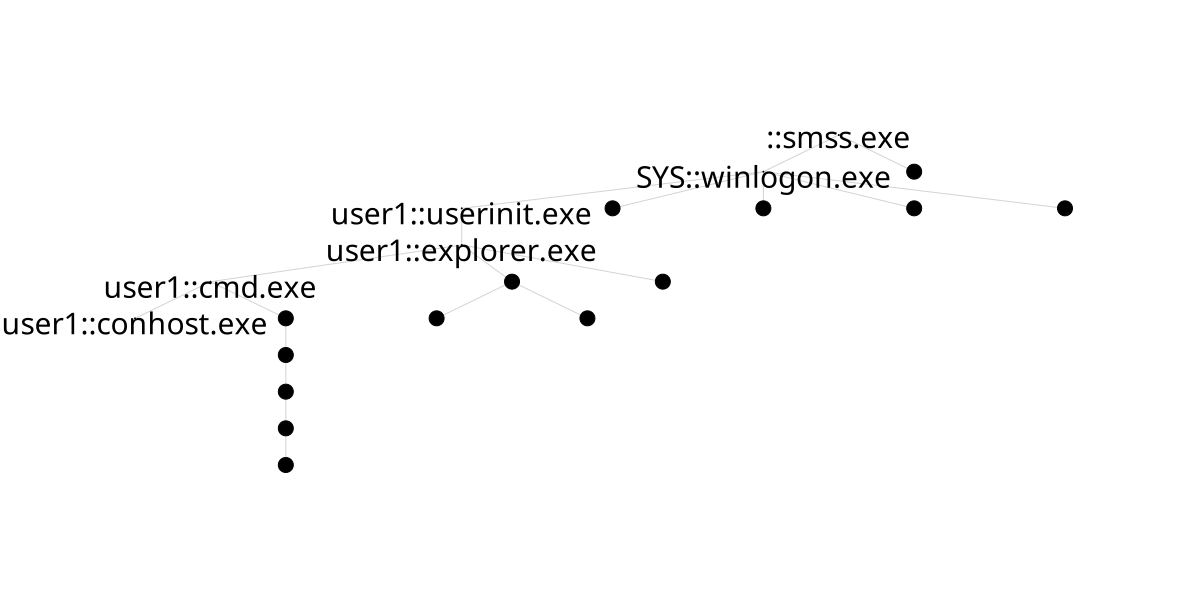}
        \caption{Exact matching on processes and users}
        \label{fig:3}
    \end{subfigure}
    \hfill
    \begin{subfigure}{0.49\textwidth}
        \includegraphics[width=\textwidth]{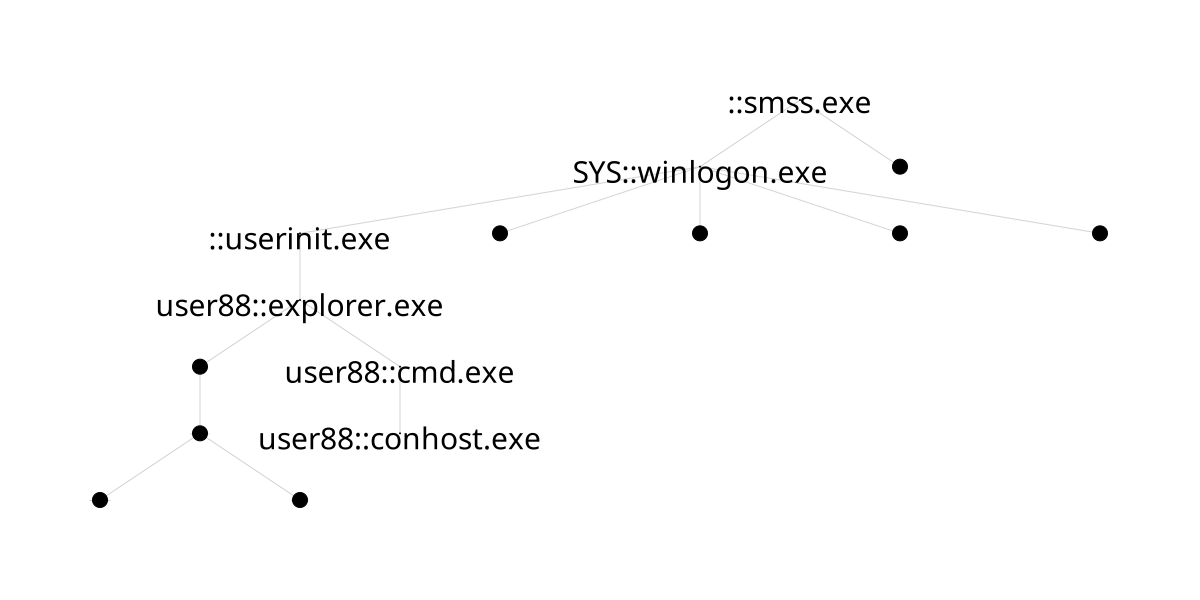}
        \caption{Soft matching on processes and users}
        \label{fig:4}
    \end{subfigure}

\caption{Results from Basic Matching Algorithm (Algorithm 1) using different similarity scores for the node labels. Node labels are shown for the matching paths only.}
    \label{fig:algo1_example}
\end{figure}

The results shown in Figure \ref{fig:algo1_example} can be seen as building blocks for more complex workflows, such as the ones described in Section \ref{SecClassifierWorkflow}. 
We illustrate two such workflows.

For the first illustration, we use a subset of the process trees we just described, keeping the ones with depth at least 4, and at most 100 nodes. There are 511 such trees, 38 of which contain at least one instance of a bad user, which we use as our templates ${\mathcal{T}_{j}}$. 
For the other 473 process trees ${\mathcal{G}_{i}}$, we run our matching algorithm against every template and store the resulting scores (length of the longest common path) as features $X_{i,j}$. We then use those features to score the process trees as follows. For each $\mathcal{G}_{i}$, we compute $S_i = \sum_j \mathbf{1}_{\{X_{i,j} \ge 3\}}$, thus counting the number of templates having a matched path with 3 nodes or more. We plot the resulting score distribution in Figure \ref{fig:acme_classify}(a), where we see that a large number of trees have very low score, while a smaller group show a large score. This is an example where matching algorithm can be used as a filter, in this case selecting a subset of process trees having paths in common with several template trees. 
Looking more closely at the results, one clear difference is that trees ${\mathcal{G}_{i}}$ with score $S_i \ge 9$ tend to be larger than other trees, with respective median values of 22 and 6 nodes.
Looking at the matching paths between templates and high scoring trees, the most frequent sequence of processes is {\tt winlogon.exe-userinit.exe-explorer.exe-cmd.exe-conhost.exe} or a subset thereof, which coincides with the results shown in Figure \ref{fig:algo1_example}.

Four process names frequently found at the root of our process trees are: {\tt bash.exe, cmd.exe, smss.exe} and {\tt taskhostw.exe}.
For the second illustration, we use the same 38 templates as before and we select a subset of 832 process trees ${\mathcal{G}_{i}}$, distinct from the templates, where the root process is one of those 4 processes.
For each process tree,
we run our matching algorithm against each template and store the resulting scores (length of the longest common path) as a 38-long feature vector $\{X_{i,j}, 0 \le j \le 37\}$. 
Randomly selecting half of the
${\mathcal{G}_{i}}$'s for training and the other half for testing,
we trained a simple random forest classifier using the 4 process names as labels.
The resulting confusion matrix is shown in Figure \ref{fig:acme_classify}(b), where we see that we can correctly classify most trees, with the exception of a small group of trees with root label {\tt bash.exe} being classified as {\tt cmd.exe}.

\begin{figure}[h]
    \centering 
    \begin{subfigure}{.46\textwidth}
        \includegraphics[width=\textwidth]{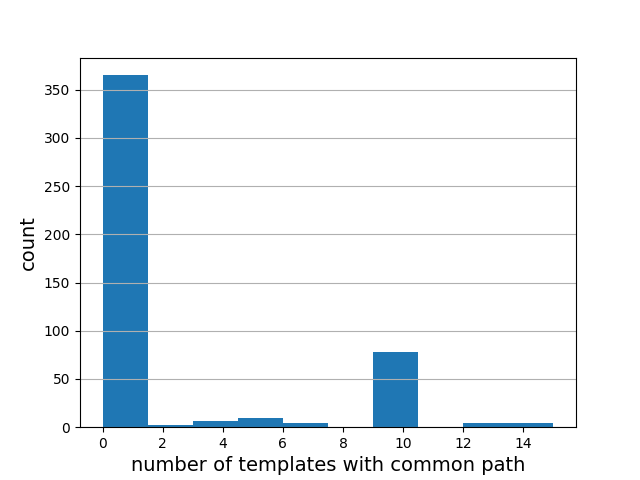}
        \caption{}
        \label{fig:one}
    \end{subfigure}
    \hfill
    \begin{subfigure}{0.47\textwidth}
        \includegraphics[width=\textwidth]{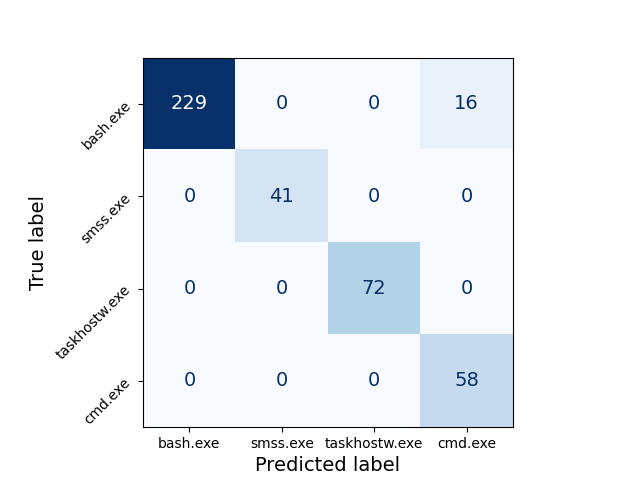}
        \caption{}
        \label{fig:two}
    \end{subfigure}
\caption{(a) Counts of the number of process trees vs number of templates with a common path. (b) Confusion matrix for a random forest classifier using our matching algorithm to build the features.}
    \label{fig:acme_classify}
\end{figure}

\bibliography{ESCBib}

\appendix


\section{Correctness of Algorithm \ref{alg:basic_match}}

We claim that Algorithm~\ref{alg:basic_match} gives the desired optimal path:

\begin{theorem} \label{thm_basic_correctness}
Fix $\mathcal{G},\mathcal{H}, \phi_{\mathcal{G}},\phi_{\mathcal{H}},w$. Let $a_{1:L}, A$ be the output of Algorithm~\ref{alg:basic_match} with this input. Then $a_{1:L}$ is a valid matching, and the score of $a_{1:L}$ satisfies
\be 
s(a_{1:L}) = A = s(\mathcal{G},\mathcal{H}).
\ee 
\end{theorem}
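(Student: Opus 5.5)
We will prove, by induction over the processing order of Algorithm~\ref{alg:basic_match}, the invariant
\be
A_{u,v} = \max_{p \in \mathcal{P}_{u,v}} s(p) \qquad \text{for all } (u,v) \in [n]\times[m],
\ee
with the empty matching (score $0$) included in $\mathcal{P}_{u,v}$, and with the convention $A_{-1,\cdot}=A_{\cdot,-1}=0$ corresponding to $\mathcal{P}_{-1,v}=\mathcal{P}_{u,-1}=\{\emptyset\}$. Because the nodes are ordered so that $u' <_{\mathcal{G}} u$ implies $u'<u$, we have $anc_{\mathcal{G}}(u) < u$ and $anc_{\mathcal{H}}(v)<v$. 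Hence, in the lexicographic loop order, the three entries used to compute $A_{u,v}$ are already final when $A_{u,v}$ is set, and the induction is well founded.

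For the inductive step we show two inequalities. For ``$\le$'': each of the three candidates is the score of some element of $\mathcal{P}_{u,v}$. We have $\mathcal{P}_{anc(u),v}\subseteq\mathcal{P}_{u,v}$ since $anc(u)<_{\mathcal{G}} u$, and the same holds in the second coordinate. Also, appending $(u,v)$ to any $p\in\mathcal{P}_{anc(u),anc(v)}$ gives a valid matching in $\mathcal{P}_{u,v}$, because its last pair satisfies $u_k\le anc(u) <u$ and $v_k \le anc(v) < v$. For ``$\ge$'': take an optimal $p\in\mathcal{P}_{u,v}$ with last pair $(u_k,v_k)$ and split into cases. If $u_k\ne u$, then $u_k <_{\mathcal{G}} u$, so $u_k \le_{\mathcal{G}} anc(u)$ because in a tree the ancestors of $u$ form a chain. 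Thus $p\in\mathcal{P}_{anc(u),v}$ and $s(p)\le A_{anc(u),v}$. The case $v_k\ne v$ is symmetric. If $(u_k,v_k)=(u,v)$, then removing the last pair gives a matching in $\mathcal{P}_{anc(u),anc(v)}$, by the same chain argument applied to $u_{k-1}<_{\mathcal{G}} u$. The empty $p$ is handled trivially since $w\ge 0$. The main subtlety sits in the chain step: a node strictly below $u$ in the tree order must lie weakly below $anc(u)$. This is where the tree structure (uniqueness of parents) is used, and where the root convention $anc(0)=-1$ needs a short check.

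Since every valid matching lies in some $\mathcal{P}_{u,v}$ (for instance, using its own last pair), the invariant gives $A=\max_{u,v}A_{u,v}=s(\mathcal{G},\mathcal{H})$.

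For the traceback, we prove by backward induction along the while loop that, when the current state is $\ell$, the accumulated suffix $\gamma$ satisfies: (i) $\gamma$ is a valid matching whose first pair, if any, is strictly above $\ell$ in both coordinates; and (ii) $s(\gamma)+A_\ell = A$. The step uses the definition of $C_\ell$. For $C_\ell\in\{1,2\}$, $A_\ell$ equals the value at the new state and nothing is prepended. For $C_\ell=3$, we prepend $\ell$, whose score contribution is $w(\phi_{\mathcal{G}}(\ell_1),\phi_{\mathcal{H}}(\ell_2))$, and then move to $(anc(\ell_1),anc(\ell_2))$, which lies strictly below $\ell$ in both coordinates. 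This keeps (i), by transitivity of $<$, and keeps (ii). The loop terminates because the node indices strictly decrease. At termination one coordinate is $-1$, so $A_\ell=0$ and therefore $s(a_{1:L})=A$. Validity of $a_{1:L}$ follows from (i).
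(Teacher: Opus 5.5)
Your proposal is correct and takes essentially the paper's route. You prove the same forward-pass invariant $A_{u,v}=\max_{p\in\mathcal{P}_{u,v}}s(p)$ by induction, using the same three-way split on the last pair of an optimal matching, and then give a traceback argument for the output sequence. Your version is more complete than the paper's: you prove both inequalities explicitly, spell out the ancestor-chain step and the empty-matching convention, and give a genuine invariant for the backward pass ($s(\gamma)+A_\ell=A$ together with validity of $\gamma$), where the paper offers only a sketch.
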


We break the proof into two lemmas, corresponding to the ``forward" and ``backward" passes of Algorithm~\ref{alg:basic_match}. We state both lemmas before we prove them.

\begin{lemma} [Score Correctness] \label{lemma_basic_correctness_score}
Fix notation as in Theorem~\ref{thm_basic_correctness}. Let $\{A_{u,v}\}_{u \in [n], v \in [m]}$ be the matrix computed in Algorithm~\ref{alg:basic_match}. Then, for all $u \in [n], v \in [m]$,
\begin{eqnarray} \label{EqCorrectness}
A_{u,v} = \max_{p \in \mathcal{P}_{u,v}} s(p).
\end{eqnarray}
In particular, this implies the final output $A$ satisfies $A = \max_{p \in \mathcal{P}} s(p).$
\end{lemma}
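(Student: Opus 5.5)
We prove \eqref{EqCorrectness} by strong induction on the pair $(u,v)$, processed in the order in which Algorithm~\ref{alg:basic_match} fills the table. The base cases are the boundary entries with $u=-1$ or $v=-1$. These are initialized to $0$, and $\mathcal{P}_{-1,v}$ and $\mathcal{P}_{u,-1}$ contain only the empty matching, whose score is $0$. We adopt the convention that the empty sequence is a valid matching with score $0$; this is harmless since $w\ge 0$. Because the nodes are ordered so that $u' <_{\mathcal{G}} u$ implies $u'<u$, we have $anc_{\mathcal{G}}(u)<u$ and $anc_{\mathcal{H}}(v)<v$. Hence the three entries $A_{anc_{\mathcal{G}}(u),v}$, $A_{u,anc_{\mathcal{H}}(v)}$ and $A_{anc_{\mathcal{G}}(u),anc_{\mathcal{H}}(v)}$ have already been computed when $A_{u,v}$ is set, and the inductive hypothesis applies to them.

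For the inductive step we write $u^-=anc_{\mathcal{G}}(u)$ and $v^-=anc_{\mathcal{H}}(v)$, and we first establish the set identity
\[
\mathcal{P}_{u,v} \;=\; \mathcal{P}_{u^-,v}\;\cup\;\mathcal{P}_{u,v^-}\;\cup\;\{\,p\oplus(u,v) : p\in\mathcal{P}_{u^-,v^-}\,\},
\]
where $\oplus$ denotes appending a pair. The inclusion ``$\supseteq$'' holds because $u^-\le_{\mathcal{G}}u$ and $v^-\le_{\mathcal{H}}v$, together with transitivity of the partial orders. The same reasoning shows that appending $(u,v)$ to a matching ending at some $(u_k,v_k)$ with $u_k\le u^-<u$ and $v_k\le v^-<v$ gives a valid matching.

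For ``$\subseteq$'', take $p\in\mathcal{P}_{u,v}$ with last pair $(u_k,v_k)$, and split into cases.
\begin{itemize}
\item If $u_k\ne u$, then $u_k<_{\mathcal{G}}u$. In a tree, every strict ancestor of $u$ is an ancestor-or-equal of $u^-$, because the unique root-to-$u$ path passes through $u^-$. Hence $p\in\mathcal{P}_{u^-,v}$.
\item If $v_k\ne v$, the symmetric argument gives $p\in\mathcal{P}_{u,v^-}$.
\item If $(u_k,v_k)=(u,v)$, the prefix $p'$ of length $k-1$ has last pair strictly below $(u,v)$ in both coordinates. By the same tree fact, $p'\in\mathcal{P}_{u^-,v^-}$ (this includes the case where $p'$ is empty), and $s(p)=s(p')+w(\phi_{\mathcal{G}}(u),\phi_{\mathcal{H}}(v))$.
\end{itemize}
Taking maxima of $s$ over the three pieces and applying the inductive hypothesis yields exactly the recursion defining $A_{u,v}$, which proves \eqref{EqCorrectness}.

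The final claim follows from $\mathcal{P}=\bigcup_{u,v}\mathcal{P}_{u,v}$. Every valid matching ending at $(u_k,v_k)$ lies in $\mathcal{P}_{u_k,v_k}$, so $\max_{p\in\mathcal{P}}s(p)=\max_{u,v}\max_{p\in\mathcal{P}_{u,v}}s(p)=\max_{u,v}A_{u,v}=A$. The step we expect to require the most care is the ``$\subseteq$'' case analysis. It is the only place where the tree structure is used: the partial ordering below $u$ is a chain, so strict predecessors of $u$ are exactly the predecessors-or-equal of $u^-$. For a general DAG this step would fail and the recursion would need a maximum over all parents. We will also need to handle the sentinel $-1$ consistently by treating $-1$ as lying below the root and as never appearing in a matching.
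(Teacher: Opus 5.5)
Your proof is correct and takes essentially the same route as the paper. Both argue by induction over $(u,v)$ in the order the table is filled, splitting three ways on the last pair of a matching in $\mathcal{P}_{u,v}$ to mirror the three options of the recursion. Your set-identity framing spells out both directions: every option is attained by some element of $\mathcal{P}_{u,v}$, and every element falls into one of the three pieces because strict ancestors of $u$ lie weakly below $anc_{\mathcal{G}}(u)$. The paper's write-up leaves both of these partly implicit.
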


\begin{lemma} [Path Correctness] \label{lemma_basic_correctness_path}
Fix notation as in Theorem \ref{thm_basic_correctness}. Then $a_{1:L}$ is a valid path, and
\be 
A = s(a_{1:L}).
\ee 
\end{lemma}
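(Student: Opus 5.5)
The proof is an invariant argument for the backward (traceback) pass; it does not actually need Lemma~\ref{lemma_basic_correctness_score}. Throughout, extend $<_{\mathcal{G}}$ and $<_{\mathcal{H}}$ by declaring the phantom node $-1$ to be below every node.

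\textbf{The invariant.} Let $\ell^{(t)}$ denote the current position, and $\gamma^{(t)} = (\gamma_{\mathcal{G}}, \gamma_{\mathcal{H}})$ the current lists, after $t$ iterations of the while loop. I will prove by induction on $t$ that:
\begin{enumerate}[(i)]
\item $\gamma^{(t)}$ is a valid matching;
\item if $\gamma^{(t)}$ is non-empty with first pair $(x,y)$, then $\ell^{(t)}_{1} <_{\mathcal{G}} x$ and $\ell^{(t)}_{2} <_{\mathcal{H}} y$;
\item $A_{\ell^{(t)}} + s(\gamma^{(t)}) = A$.
\end{enumerate}

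\textbf{Base case and inductive step.} At $t=0$ the lists are empty and $\ell^{(0)} = \textrm{argmax}\, A_{u,v}$. So $A_{\ell^{(0)}} = \max_{u,v} A_{u,v} = A$, and all three conditions hold.

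For the step, suppose neither coordinate of $\ell = \ell^{(t)}$ equals $-1$. By construction of $C_{\ell}$, the entry $A_{\ell}$ equals the expression in the selected branch. Ties are harmless: whichever branch is recorded attains the maximum.
\begin{itemize}
\item If $C_\ell = 1$, then $A_{\ell} = A_{anc_{\mathcal{G}}(\ell_1), \ell_2}$ and the lists are unchanged, so (iii) is preserved. Condition (ii) is preserved because $anc_{\mathcal{G}}(\ell_1) <_{\mathcal{G}} \ell_1 <_{\mathcal{G}} x$.
\item $C_\ell = 2$ is symmetric.
\item If $C_\ell = 3$, then $A_\ell = w(\phi_{\mathcal{G}}(\ell_1), \phi_{\mathcal{H}}(\ell_2)) + A_{anc_{\mathcal{G}}(\ell_1), anc_{\mathcal{H}}(\ell_2)}$. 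We prepend $\ell$, which adds exactly this weight to $s(\gamma)$, so (iii) is preserved.
  \begin{itemize}
  \item Validity (i) of the new list follows from the old (ii): $\ell_1 <_{\mathcal{G}} x$ and $\ell_2 <_{\mathcal{H}} y$.
  \item The new (ii) holds because the new position $(anc_{\mathcal{G}}(\ell_1), anc_{\mathcal{H}}(\ell_2))$ is strictly below the newly prepended pair $\ell$ in both trees.
  \end{itemize}
\end{itemize}

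\textbf{Termination and conclusion.} Each iteration replaces at least one coordinate by its ancestor. In the fixed node ordering, where $u <_{\mathcal{T}} v$ implies $u < v$, an ancestor has a strictly smaller index. Hence $\ell_1 + \ell_2$ strictly decreases, and the loop terminates after at most $n+m+2$ steps. It stops exactly when a coordinate of $\ell$ equals $-1$, and there the initialization gives $A_{\ell} = 0$. Invariant (iii) then yields $s(a_{1:L}) = A$, and (i) gives validity of $a_{1:L}$. Combining with Lemma~\ref{lemma_basic_correctness_score} gives Theorem~\ref{thm_basic_correctness}.

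\textbf{Main obstacle.} The only delicate point is formulating (ii) with \emph{strict} inequalities, because this is what makes the output a valid matching, which requires $u_i <_{\mathcal{G}} u_{i+1}$. Checking (ii) in the case $C_\ell = 3$ is where the ancestor (rather than the same node) being carried forward is essential. A smaller point is handling the sentinel $-1$ consistently, both in the ordering and in the boundary values of $A$.
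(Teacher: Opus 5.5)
Your proof is correct and follows the same backtracking idea as the paper. The paper only sketches this lemma, saying that the backward pass reverses the decisions recorded in $C_{u,v}$. Your invariants supply the formalization it omits: the current position stays strictly below the first matched pair, and $A_{\ell}+s(\gamma)=A$ is preserved. You also correctly observe that the argument needs only the recursion defining $A_{u,v}$ and the boundary values $A_{-1,\cdot}=A_{\cdot,-1}=0$, not Lemma~\ref{lemma_basic_correctness_score}.
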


We prove the two lemmas.

\begin{proof} [Proof of Lemma \ref{lemma_basic_correctness_score}]
We prove Equation \eqref{EqCorrectness} by ``double induction" on the indices $u,v$. For the base case, the values $A_{u,v}$ with either $u=-1$ or $v=-1$ correspond to empty matchings, and so these trivially satisfy Equation \eqref{EqCorrectness}. 

We now consider the formula for $A_{u,v}$ in line 4 of Algorithm \ref{alg:basic_match}, and assume that $A_{u',v'}$ satisfies Equation \eqref{EqCorrectness} for all previously-computed $u',v'$. Let $q \in \mathcal{P}_{u,v}$ be a valid partial matching that satisfies $s(q) = \max_{p \in \mathcal{P}_{u,v}} s(p)$. Then consider the last element $(q_{L}(1), q_{L}(2))$ of the list $q$. We have three cases, which we consider in the same order as the three cases in line 4 of Algorithm \ref{alg:basic_match}:

\begin{enumerate}
\item \textbf{Case 1: $q_{L}(1) \neq u$.} In this case, $q \in \mathcal{P}_{anc_{\mathcal{G}}(u), v}$, so by the induction hypothesis 
\be 
A_{u,v} = s(q) \leq A_{anc_{\mathcal{G}}(u), v}.
\ee 
Since $q$ maximizes the score amongst paths in $\mathcal{P}_{u v}$ and $\mathcal{P}_{anc_{\mathcal{G}}(u), v} \subseteq \mathcal{P}_{u v}$, we must in fact have
\be 
A_{u,v} =  s(q) = A_{anc_{\mathcal{G}}(u), v}.
\ee 
\item \textbf{Case 2: $q_{L}(2) \neq v$.} By the same argument as in Case~1, 
\be 
A_{u,v} = s(q) = A_{u,anc_{\mathcal{H}}(v)}.
\ee 
\item \textbf{Case 3: $(q_{L}(1), q_{L}(2)) = (u,v)$.} In this case, the list $q_{1},\ldots,q_{L-1} \in \mathcal{P}_{anc_{\mathcal{G}}(u),anc_{\mathcal{G}}(v)}$. Thus, 
\begin{eqnarray*} 
A_{u,v} &= s(q) \\
&= w(\phi_{\mathcal{G}}(u),\phi_{\mathcal{H}}(v)) + s((q_{1},\ldots,q_{L-1}))\\
&\leq w(\phi_{\mathcal{G}}(u),\phi_{\mathcal{H}}(v)) + A_{anc_{\mathcal{G}}(u), anc_{\mathcal{H}}(v)}, 
\end{eqnarray*}
where the third line is due to the induction hypothesis. On the other hand, by choosing $q' \in \mathcal{P}_{anc_{\mathcal{G}}(u),anc_{\mathcal{G}}(v))}$ to satisfy $s(q') = \max_{p \in \mathcal{P}_{anc_{\mathcal{G}}(u),anc_{\mathcal{G}}(v))}} s(p)$, we can see that the path $q''$ obtained by appending $(u,v)$ to $q'$ is in $\mathcal{P}_{u,v}$ and satisfies
\be 
s(q') = w(\phi_{\mathcal{G}}(u),\phi_{\mathcal{H}}(v)) + A_{anc_{\mathcal{G}}(u), anc_{\mathcal{H}}(v)}.
\ee 
In fact, since $q$ had the highest possible score amongst paths in $A_{u,v}$, the truncation to $(q_{1},\ldots,q_{L-1})$ must have had the highest possible score amongst paths in $A_{anc_{\mathcal{G}}(u), anc_{\mathcal{H}}(v)}$. Thus, in fact 
\be 
A_{u,v} = s(q) =  w(\phi_{\mathcal{G}}(u),\phi_{\mathcal{H}}(v)) + A_{anc_{\mathcal{G}}(u), anc_{\mathcal{H}}(v)}.
\ee 

\end{enumerate}
Thus, in all three cases, we matched $A_{u,v}$ to one of the corresponding expressions on the right-hand side of line 5. This implies that $A_{u,v}$ as calculated in that line must have the correct value.

The last claim is obvious since $\mathcal{P} = \bigcup_{u,v} \mathcal{P}_{u,v}$, and we have just seen that $A_{u,v}$ contains the highest score within each of the component sets $\mathcal{P}_{u,v}$.
\end{proof}

Since Lemma \ref{lemma_basic_correctness_path} is obvious but the proof is tedious to formalize, we only provide a sketch. 

\begin{proof} [Sketch of the Proof of Lemma \ref{lemma_basic_correctness_path}]
Once a dynamic programming algorithm completes a bottom-up tabulation (the ``forward pass''), one ends up with a single value that optimizes a given problem at hand. In our particular case, it is the maximum score over all valid matchings. To find the actual matching that led to that result, the algorithm needs to perform a ``backward pass.'' This is the tabulation $C_{u,v}$ equivalent of backtracking: we are reversing the decisions made during the filling of the matrix $A_{u,v}$.
\end{proof}

\section{Algorithm \ref{alg:basic_match} --- Small Variations and Incorporating Prior Knowledge} \label{SubsecSmallVar}

When Algorithm~\ref{alg:basic_match} is applied with score $\mathcal{S} \equiv 1$, it returns the largest common sub-path of the two trees. It can be viewed as a reasonable ``default" behaviour for a planted-path algorithm, but in practice one may want slightly different behaviour. We briefly sketch some common adjustments but full details are left for the journal version of this paper.

\begin{itemize}
    \item \textbf{Nonuniform match importance:} Some matches may be more informative than others. A standard default choice is to ``fit" the score function $w$ to the data, setting
    \be 
    w(i,i) \propto \frac{2}{N(i)+2},
    \ee where $N(i)$ is the number of times label $i$ appears in the dataset (see \textit{e.g.} Chapter~8 of \cite{Manning_Raghavan_Schutze_2008} for an introduction to scoring rules of this nature). 
    \item \textbf{Subtrees rather than paths:} We may want to find the optimal matching of two full \textit{subtrees} rather than \textit{paths}. We can allow for this by replacing the condition 
\[
u_{i} <_{\mathcal{G}} u_{i+1} \text{ and } v_{i} <_{\mathcal{H}} v_{i+1} \text{ for all } i \in \{1,2,\ldots, k-1\}
\]
in Definition~\ref{DefValidPartialMatchingPaths} by the relaxed condition 
\[
u_{i} <_{\mathcal{G}} u_{j} \text{ if and only if } v_{i} <_{\mathcal{H}} v_{j} \text{ for all } i,j \in \{1,2,\ldots, k-1\}.
\]
The ``optimal score" and recursive algorithms can be easily tweaked.
\item \textbf{Preference for shorter matches: } Increasing the lengths of matches always increases our score, and so our method is biased towards giving long matches (even if some parts of the match contribute very little to the score). We can find the best score for paths \textit{of each length} by replacing  $A_{u,v}$ with
\be \label{EqTweakShort}
A_{u,v,r} = \max_{p \in \mathcal{P}_{u,v,r}} s(p),
\ee 
where 
\be 
\mathcal{P}_{u,v,r} = \{ \gamma \in \mathcal{P}_{u,v} \, : \, | \gamma | = r\}.
\ee 
\item \textbf{Small-gap, high-density, or top-$K$:} Just as we sometimes wish to find small paths, we might want to find paths with a small number of ``skipped" nodes. We might also wish to find the ``top-$K$" paths, rather than the single best match. These can be computed by making essentially the same tweak as in Equation~\eqref{EqTweakShort}.
\end{itemize}

\section{Generating Random Trees and Random Paths} \label{sec:GW-tree}

Algorithm~\ref{alg:stm} requires an algorithm for generating (tree, path) pairs. We give a simple example that fits our data reasonably well and which we use for simulation experiments. Our model generates random trees with depth at most $N$ and aims to produce ``bushy'' trees. Fix a positive integer $N$ and a real number $\lambda > 1$. To generate a random tree, we will use a well-known Galton-Watson process in which the number of children of nodes are i.i.d.\ random variables that follow the Poisson distribution with mean $\lambda$. 

Let $X_k$ be the number of nodes at level $k$. Then, the sequence $(X_k)$ evolves according to the recurrence formula: $X_0 = 1$ and for any $k \in \mathbb{N}$, $X_k$ is a sum of $X_{k-1}$ i.i.d.\ random variables $\mathrm{Po}(\lambda)$. We stop the process when either $X_{k} =0$ (there are no nodes at level $k$) or $k = N$ (we have reached the maximum depth). Note that this procedure might not succeed as the process may finish before getting to depth $N$. If this happens, then we simply re-start the process from the beginning and continue doing so until it eventually generates the desired tree. Let $\mathcal{GW}(N,\lambda)$ be the probability distribution over the generated trees of depth $N$.

\begin{remark} [Complexity of Rejection Sampling]
It is natural to ask: how long does it take to sample from $\mathcal{GW}(N,\lambda)$? It is known that the probability of the Galton-Watson process ``going on forever'' (the survival probability) is $1-q$, where $q = q(\lambda)$ (the probability of eventual extinction) satisfies the following equality: $q = e^{\lambda(q-1)}$. Hence, each independent attempt to generate a tree succeeds with probability at least $1-q$. For example, $1-q \approx 0.797$ when $\lambda = 2$ and $1-q \approx 0.940$ when $\lambda = 3$. For such values of $\lambda$ the expected number of re-samplings is very small, regardless of the value of $N$.
\end{remark}

For a given tree $\mathcal{T}$ with a root $r$, regardless whether generated at random or not, we may simply select a random directed path as follows. Let $u$ be one of the leaves of $\mathcal{T}$ selected uniformly at random. Then, let $\Gamma \equiv (r = \gamma_{1}, \gamma_{2}, \ldots, u=\gamma_{\ell}) \subseteq \mathcal{T}$ be the unique path from $r$ to $u$ in $\mathcal{T}$.

Figure \ref{fig:bushy} illustrates the results of sampling trees and paths using this process, then labeling them according to Algorithm \ref{alg:sltoc}. The left-hand tree corresponds to $\lambda = 1.24$, $N=11$; the right-hand tree corresponds to $\lambda=2.22$, $N=4$. In both cases the same base sequence $a$ and matching probability $p=0.99$ were used in Algorithm~\ref{alg:sltoc}, with the planted path shown in bright colours.

\begin{figure}[H]
    \centering
    \includegraphics[width=0.8\linewidth]{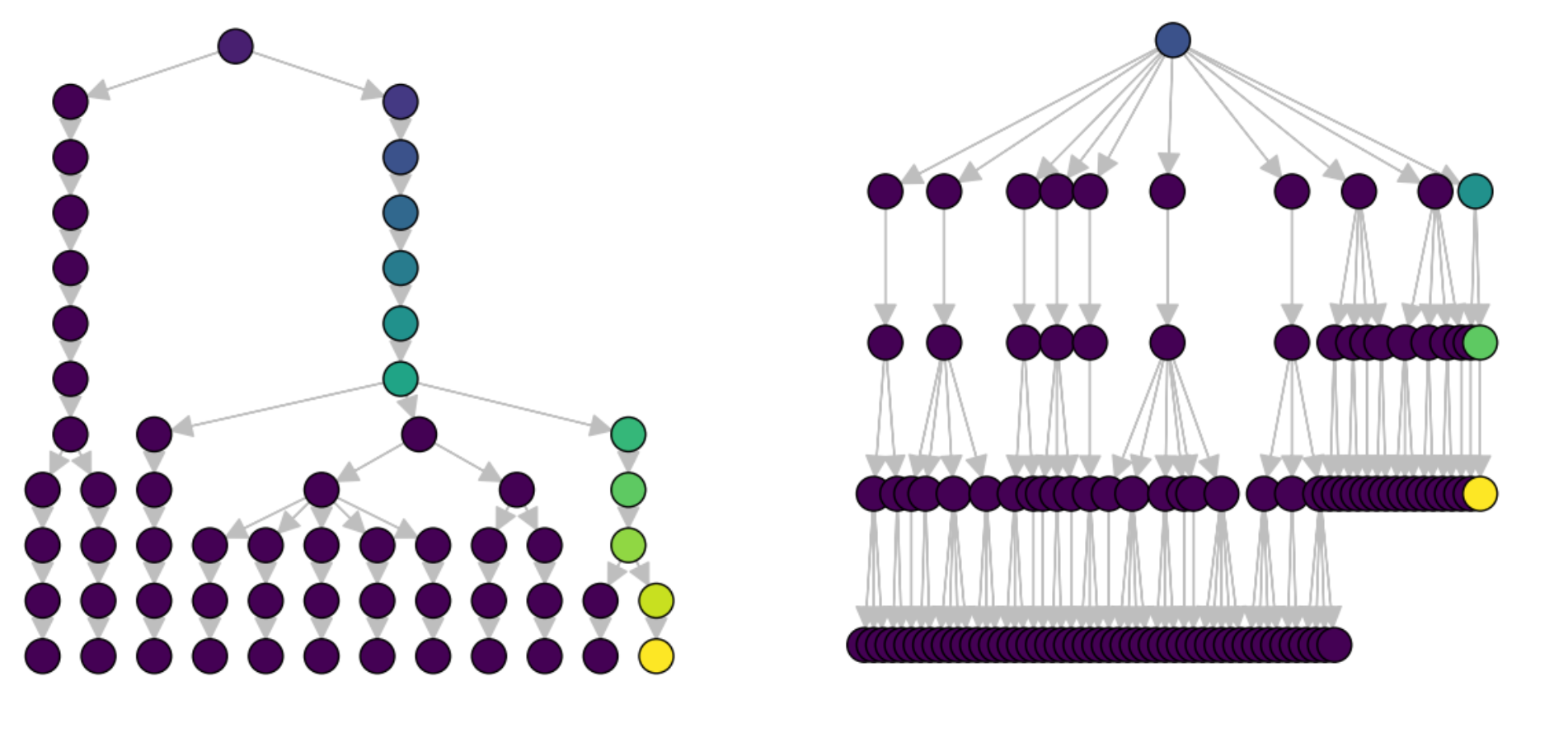}
    \caption{Two trees with labelled paths.}
    \label{fig:bushy}
\end{figure}

\section{Gemini}

The title was suggested by Gemini based on our introduction. 
It was justified like this. ``It is my personal favorite. It twists the classic ``needle in a haystack'' idiom (which everyone in data science knows) to perfectly describe your specific contribution: you aren't looking for a single point, but a linear sequence (a thread).''

\begin{center}
\includegraphics[width=0.8\linewidth]{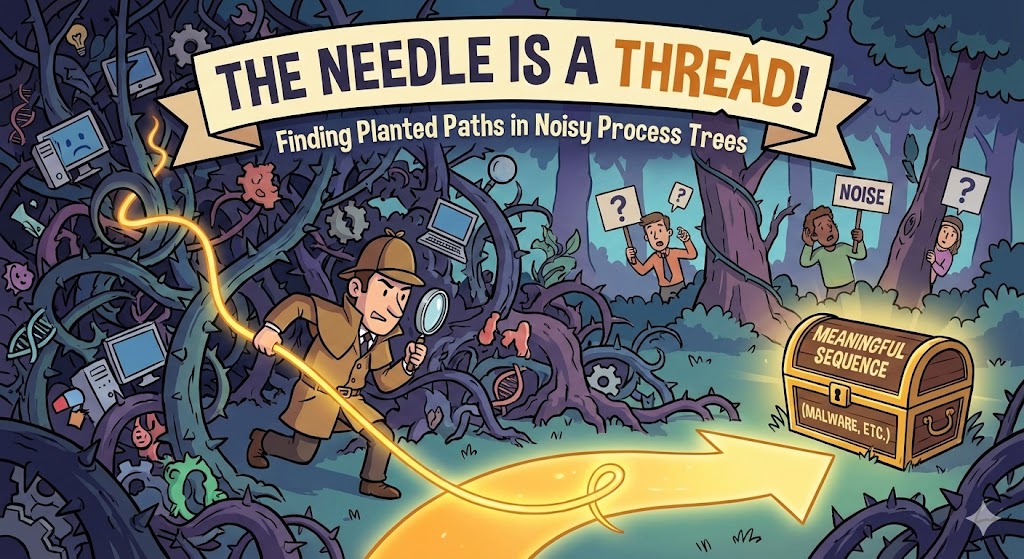}
\end{center}

\end{document}